\newcommand{\A}{\mathcal{A}}
\newcommand{\AD}{\Gamma}
\newcommand{\tf}{\tilde{f}}
\renewcommand{\S}{\mathcal{S}}
\newtheorem{proposition}{Proposition}[section]
\newtheorem{theorem}[proposition]{Theorem}
\theoremstyle{definition}
\newtheorem{example}[proposition]{Example}
\begin{document}
\title{Attractor identification in asynchronous Boolean dynamics\\with network reduction}
\author[1]{Elisa Tonello}
\author[2]{Loïc Paulev\'{e}}
\affil[1]{Freie Universität Berlin, Germany}
\affil[2]{Univ. Bordeaux, CNRS, Bordeaux INP, LaBRI, UMR 5800, F-33400 Talence, France}

\maketitle

\begin{abstract}
  Identification of attractors, that is, stable states and sustained oscillations, is an important step in the analysis of Boolean models and exploration of potential variants. We describe an approach to the search for asynchronous cyclic attractors of Boolean networks that exploits, in a novel way, the established technique of elimination of components. Computation of attractors of simplified networks allows the identification of a limited number of candidate attractor states, which are then screened with techniques of reachability analysis combined with trap space computation. An implementation that brings together recently developed Boolean network analysis tools, tested on biological models and random benchmark networks, shows the potential to significantly reduce running times.

\end{abstract}

\section{Introduction}

Boolean networks are adopted as modelling tools to organise knowledge and explore possible
behaviours emerging in biological processes~\cite{Schwab2021,zanudo21,montagud22}.
From the logic describing the influence between species, dynamics are defined to express the
evolution of variables at play. Update schemes that implement asynchrony are of particular interest,
as they express one form of inherent stochasticity~\cite{stoll2017}.
Attractors are fundamental structures that ought to capture the fate or stable behaviours of modelled systems.
Recently, the topic of identification of attractors of asynchronous dynamics has seen renewed interest and several developments.
To look at the last two years alone,
\cite{rozum2021parity} suggested an algorithm that combines different techniques such as motif detection and time reversal;
\cite{benevs2021computing} developed a symbolic approach that can handle large transition systems,
adopting binary decision diagrams to represent Boolean networks, and supporting partially defined update functions \cite{benevs2022aeon};
\cite{van2021improved,trinh2022computing} described approaches based on feedback vertex set identification and model checking reachability analysis.
The new techniques have enabled the handling of models of increasing complexity, as summarized in \cite{trinh2022computing}.

With this work we want to investigate the usefulness of network reduction in the investigation of attractors of asynchronous state transition graphs.
We refer specifically to the popular reduction method that consists in the iterative elimination of non-autoregulated components, as described in \cite{naldi2009reduction,naldi2011dynamically,veliz2011reduction}.
When a variable is selected for elimination, the regulatory functions of its targets are changed to remove the selected variable and replace it with its regulatory function.
The asynchronous dynamics and regulatory structures of the original network and the network obtained with this reduction process are related in a useful way; for instance, the networks have the same steady states, and the number of attractors of the reduced network cannot be smaller than the original one.
Properties of network reduction have already been exploited for attractor identification, in particular in \cite{rozum2021parity}, where reduction is used in combination with other methods under some specific cases. Here we show that a systematic use of variable elimination, adopted as the first step of the network analysis, can be quite beneficial.

The standing of variable elimination as a useful tool for identification of asynchronous attractors comes from the following property. For each attractor $\A$ of a Boolean network $f$, if $\tf$ is obtained from $f$ by iteratively eliminating some non-autoregulated components, there exists at least one state in an attractor of $\tf$ from which a state in $\A$ can be reconstructed, by tracing the reduction process backwards.
This property allows us to identify, from the attractors of $\tf$, a limited set of ``candidate''
states that cover all the attractors of $f$ (\cref{sec:reduction}).
Then, a screening of these candidate states is required for filtering out those outside the attractors of
$f$.
By calculating the minimal trap spaces \cite{trinh2022minimal,moon2022computational} we can first check if a candidate state is part of an attractor which is contained in a minimal trap space, and is the sole attractor contained in that minimal trap space.
If there is more than one candidate attractor in a minimal trap space, or if there are candidate
states outside of minimal trap spaces, other checks are required to determine whether they are actually part of an attractor. More computationally demanding techniques, for example from model checking, are useful for this step \cite{Pint,van2021improved} to check if a previously identified attractor, or a trap space not containing the candidate state, can be reached from the candidate state.

We tested these ideas on both biological and random networks by implementing variable elimination using \texttt{colomoto}'s \texttt{minibn} \cite{naldi2018colomoto}. The reduced network were then studied using AEON \cite{benevs2022aeon} and mtsNFVS \cite{trinh2022computing}, to identify candidate attractor states.
We used \texttt{trappist} to find the minimal trap spaces \cite{trinh2022minimal} and enable the first screening of the candidate states.
For the check of the remaining candidate states we used mtsNFVS's reachability analysis software \cite{trinh2022computing}.
We found, in general, high potential for reduction of computational times (\cref{sec:results}).
In addition, we observed that the reachability check, which is the most computationally expensive task in the pipeline, needs to be invoked only occasionally.
In fact, although variable elimination can lead to an increased number of attractors, this appears to happen in a quite limited number of cases, meaning that the number of candidate states is often very close, if not equal, to the number of attractors. As a result, the amount of work necessary to handle such situations is also contained.
In the last section (\cref{sec:discussion}) we comment on some avenues that could be explored for possible further improvements of the techniques discussed.

\section{Background}

\subsection{Boolean networks}\label{sec:background_bn}

We call $V$ the set of variables or species of interest, and set $n = |V|$.
A Boolean network is defined by a map $f = (f_1, \dots, f_n) \colon 2^n \to 2^n$.
The set $2^n$ is called the set of states or configurations of the Boolean network.
For $i=1,\dots,n$ and $x \in 2^n$, we denote by $\bar{x}^i$ the state that coincides with $x$ on all components except $i$.
A \emph{subspace} is a subset of the state space consisting of all the states that share the same values for a set of components, called the \emph{fixed} variables of the subspace.

The Boolean function $f_i$, $i=1,\dots,n$, is sometimes called the \emph{update function} of variable $i$.
The \emph{influence graph} $G(f)$ of $f$ is the signed multi-digraph with set of nodes $\{1,\dots,n\}$ and edges capturing the dependence of update functions on each variable: there exist an edge $(i, j)$ in $G(f)$ with sign $s$ if and only if, for some $x \in 2^n$, $f_j(\bar{x}^i) \neq f_j(x)$ and $s = (f_j(\bar{x}^i) - f_j(x))(\bar{x}^i_i - x_i)$.

A \emph{state transition graph} or \emph{dynamics} associated to $f$ is a graph with set of vertices $2^n$ and set of edges (also called \emph{transitions}) that depends on the chosen semantics.
This work deals with the \emph{asynchronous dynamics}, a form of non-deterministic dynamics which
includes only transitions between states differing by one component. Specifically, an edge exists from a state $x$ to a state $\bar{x}^i$ if and only if $f_i(x) \neq x_i$. We will write $\AD(f)$ to denote the asynchronous state transition graph of $f$.

A \emph{trap set} for the dynamics is a subset of the state space that does not admit any outgoing transition. An \emph{attractor} is a trap set that is minimal under inclusion.
Attractors can consist of singleton states, which are called \emph{steady states} or \emph{fixed points}, or can involve more than one state, in which case they are referred to as \emph{cyclic} or \emph{complex attractors}.

Determining whether a given state belongs to an attractor is a PSPACE-complete problem with
asynchronous or synchronous dynamics~\cite{PS2022}. In practice, computations usually rely on
computing (partly and symbolically) the state transition graph, which can be significantly more
complex in the asynchronous than in the synchronous case.

\emph{Trap spaces} are subspaces that do not admit any outgoing transition. They can be viewed as generalizations of steady states, since they are fixed points for the restriction of the dynamics to some components.
Compared to attractor identification, computation of trap spaces and related properties is a much more
tractable problem, due to a largely reduced theoretical complexity~\cite{moon2022computational}.
Minimal trap spaces are particularly interesting, since each of them must contain at least one attractor. Often minimal trap spaces of Boolean networks that serve as biological models contain only one attractor \cite{klarner2015approximating}, meaning, for instance, that they can replace the attractor they contain as reachability targets in reachability analysis.
Although some properties of the regulatory structure that can guarantee a ``good'' attractor landscape (attractors only in minimal trap spaces, uniqueness of attractors in minimal trap spaces) have been identified \cite{richard2023attractor,naldi2022linear}, these have limited application. In general, establishing whether a Boolean network admits multiple attractors in a minimal trap space, or attractors outside of minimal trap spaces (the ``motif-avoidant'' attractors of \cite{rozum2021parity}), remains a difficult task.

Based on these considerations, we can lay out the following classification of attractors of Boolean asynchronous dynamics into four categories, which will be useful in our discussion later:
\begin{itemize}
  \item[(A)] \emph{steady states}: these are ``easy'' to find. Since for our algorithm we calculate all minimal trap spaces, we find the steady states as the minimal trap spaces where all variables are fixed variables.
  \item[(B)] \emph{minimal univocal}: we use the term that was introduced in \cite{klarner2015approximating} to refer to minimal trap spaces that contain only one attractor. For convenience, we apply the term also to the attractors contained in these minimal trap spaces.
  \item[(C)] \emph{minimal nonunivocal}: these are attractors that are contained in a minimal trap space but are not the only attractor contained in that trap space.
To investigate their existence we have the convenience of being allowed to restrict the search space to the minimal trap space.
  \item[(D)] \emph{nonminimal} or \emph{motif-avoidant} \cite{rozum2021parity}: these are the most difficult to detect or exclude the existence of.
\end{itemize}

\begin{example}\label{ex:fgh}
The asynchronous state transition graphs of the maps
\begin{equation*}
  \begin{aligned}
    f(x_1,x_2) & = (x_1x_2 \vee \bar{x}_1\bar{x}_2, 0), \\
    g(x_1,x_2,x_3) & = (x_2 \bar{x}_1 \vee x_1 \bar{x}_2,
                        x_1 (x_2 x_3 \vee \bar{x}_2 \bar{x}_3) \vee \bar{x}_1 ( x_2 \bar{x}_3\vee x_3 \bar{x}_2),
                        x_2 x_3 \vee \bar{x}_2 \bar{x}_3), \\
    h(x_1,x_2) & = (x_1\bar{x}_2 \vee \bar{x}_1 x_2, x_1\bar{x}_2 \vee \bar{x}_1 x_2)
  \end{aligned}
\end{equation*}
are represented in \cref{fig:fgh}.

\begin{figure}[t!]
\begin{center}
\tikzcdset{arrow style=tikz, diagrams={>=stealth'}}
\begin{tikzpicture}[mybox/.style={draw, inner sep=5pt}]

\node[mybox,scale=1] at (-4,0) {
\begin{tikzcd}
   01 \arrow[d] & 11 \arrow[d] \\
   00 \arrow[r,yshift=2pt] & 10 \arrow[l,yshift=-2pt]
\end{tikzcd}
};

\node[mybox,scale=0.6] {
\begin{tikzcd}[row sep=small,column sep=small]
011 \arrow[rrr,yshift=1.5pt] \arrow[rd,yshift=-1.5pt,xshift=-1.5pt] & & & 111 \arrow[lll,yshift=-1.5pt] \\
& 001 \arrow[d,xshift=-1.5pt] \arrow[lu,yshift=+1.5pt,xshift=-1.5pt] & 101 \arrow[d,xshift=-1.5pt] & \\
& 000 \arrow[u,xshift=1.5pt] & 100 \arrow[rd,yshift=-1.5pt,xshift=1.5pt] \arrow[u,xshift=1.5pt] & \\
010 \arrow[rrr,yshift=1.5pt] & & & 110 \arrow[lll,yshift=-1.5pt] \arrow[lu,xshift=1.5pt,yshift=1.5pt]
\end{tikzcd}
};

\node[mybox,scale=1] at (4,0) {
\begin{tikzcd}
   01 \arrow[r,yshift=2pt] & 11 \arrow[l,yshift=-2pt] \arrow[d,xshift=-2pt] \\
   00 & 10 \arrow[u,xshift=2pt]
\end{tikzcd}
};

\end{tikzpicture}
\end{center}
\caption{Asynchronous state transition graphs of the Boolean networks $f$, $g$ and $h$ of \cref{ex:fgh}.
$\AD(f)$ has one minimal univocal attractor,
$\AD(g)$ has two minimal nonunivocal attractors,
and $\AD(h)$ has a minimal univocal and a nonminimal attractor.}\label{fig:fgh}
\end{figure}
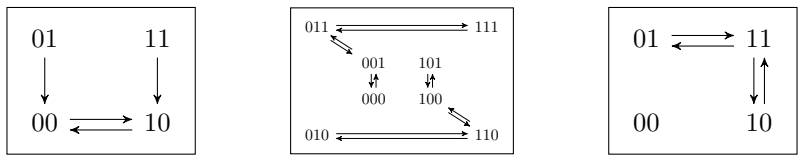

$\AD(f)$ has a cyclic attractor and minimal trap space $\{00,10\}$.

$\AD(g)$ has two attractors that are nonunivocal, since they are found in the same minimal trap space (the full space).

$\AD(h)$ has two attractors, one steady state 00 and one nonminimal attractor $\{01,10,11\}$. The steady state is the unique minimal trap space.
\end{example}

In the next section we review how the removal of components works in Boolean networks and discuss how it can help with identification of attractors.

\subsection{Reduction}\label{sec:reduction}

A popular reduction method for asynchronous dynamics of Boolean networks iteratively eliminates variables that are not autoregulated \cite{naldi2009reduction,naldi2011dynamically,veliz2011reduction}.
The approach has been recently extended to negatively autoregulated components \cite{schwieger2023reduction}. Although all our observations here can be extended, \textit{mutatis mutandis}, to negatively autoregulated components, we only discuss the standard case for sake of simplicity.

Suppose that $G(f)$ has a node that does not have a loop. Without loss of generality, we can assume that the node is $n$. We write $\pi \colon 2^n \to 2^{n-1}$ for the projection on the first $n-1$ variables.

By definition of $G(f)$, we have that $f_n(x,0)=f_n(x,1)$ for all $x \in 2^{n-1}$.
In particular, the state transition graph of $f$ admits exactly one of the transitions from $(x, 0)$ to $(x, 1)$ or from $(x, 1)$ to $(x, 0)$. We can therefore define a map $\S^n$ that associates to a ``reduced'' state a state in the larger space, as follows:
\begin{equation*}
  \begin{aligned}
    \S^n \colon 2^{n-1} & \to 2^n \\
    x & \mapsto (x, f_n(x, 0)) = (x, f_n(x, 1)).
  \end{aligned}
\end{equation*}

The reduction $\tf$ of $f$ obtained by elimination of component $n$ is then defined as
\begin{equation*}
  \tf = (f_1 \circ \S^n, \dots, f_{n-1} \circ \S^n) \colon 2^{n-1} \to 2^{n-1}.
\end{equation*}
Given $y \in 2^n$, the state $\S^n(\pi(y))$ can be thought of as the ``representative state'' of the pair $(\pi(y), 0)$, $(\pi(y), 1)$, or the state that ``survives the reduction'', since all transitions leaving the state $\S^n(\pi(y))$ have a corresponding transition in $\AD(\tf)$, whereas the transitions leaving the state $\overline{\S^n(\pi(y))}^n$ are not guaranteed to be preserved \cite{naldi2011dynamically}.

It was shown in \cite{naldi2009reduction,naldi2011dynamically} that $f$ and $\tf$ admit the same number of steady states, and the number of attractors of $\tf$ is greater or equal to the number of attractors of $f$.
The following result, which forms the basis for our method, is a simple consequence of properties proved in \cite{naldi2011dynamically}.

\begin{theorem}
  If $\A$ is an attractor of $f$, then there exists at least one attractor for $\tf$ in $\pi(\A)$, and for each $x \in \pi(\A)$ contained in an attractor of $\tf$, $\S^n(x) \in \A$.
\end{theorem}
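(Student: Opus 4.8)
The plan is to work directly from the two preservation facts that were established in \cite{naldi2011dynamically} and sketched in the paragraph above: (i) every transition leaving a "surviving" state $\S^n(x)$ has a corresponding transition in $\AD(\tf)$ (so $\pi$ maps the out-neighbourhood of $\S^n(x)$ onto the out-neighbourhood of $x$), and (ii) the transition between $(x,0)$ and $(x,1)$ always goes \emph{towards} $\S^n(x)$, i.e.\ the non-surviving state $\overline{\S^n(x)}^n$ has an edge to $\S^n(x)$ in $\AD(f)$. I would first record the easy direction of the correspondence as well: if $x \to \bar x^i$ is a transition of $\AD(\tf)$ (with $i \le n-1$), then $\S^n(x) \to \overline{\S^n(x)}^i$ is a transition of $\AD(f)$; this follows since $\tf_i(x) = f_i(\S^n(x))$ by definition of $\tf$, and flipping coordinate $i$ on $\S^n(x)$ does not disturb the convention used to define $\S^n$ on the flipped state (one checks $\S^n(\bar x^i)$ and $\overline{\S^n(x)}^i$ agree because $f_n$ does not depend on $i$ only when $i=n$ — more precisely one just needs that both states have the same value $f_n(\cdot,0)$ in coordinate $n$, which holds because the first $n-1$ coordinates determine it). So $\S^n$ restricted to surviving states is a bijection of $\AD(\tf)$ onto the induced subgraph of $\AD(f)$ on surviving states, and moreover every non-surviving state has an edge into its surviving partner.

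With these tools in hand I would argue the two claims of the theorem separately. For the second claim, suppose $x \in \pi(\A)$ lies in an attractor $\B$ of $\tf$. I want $\S^n(x) \in \A$. Since $x \in \pi(\A)$, some state $y \in \A$ projects to $x$, so $y$ is either $\S^n(x)$ or $\overline{\S^n(x)}^n$; in the latter case, by fact (ii) there is an edge $\overline{\S^n(x)}^n \to \S^n(x)$ in $\AD(f)$, and since $\A$ is a trap set and contains $\overline{\S^n(x)}^n$, it must contain $\S^n(x)$ as well. Either way $\S^n(x) \in \A$.

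For the first claim, I start from an arbitrary state $y_0 \in \A$ and aim to produce an attractor of $\tf$ inside $\pi(\A)$. First move within $\A$ to a surviving state: if $y_0$ is non-surviving, fact (ii) gives an edge to its surviving partner, which lies in $\A$ because $\A$ has no outgoing edges; so WLOG $y_0 = \S^n(x_0)$ with $x_0 \in \pi(\A)$. Now I claim $\pi(\A)$ is a trap set for $\AD(\tf)$: any edge $x_0 \to \bar x_0^i$ in $\AD(\tf)$ lifts, via the easy direction above, to an edge $\S^n(x_0) \to \overline{\S^n(x_0)}^i$ in $\AD(f)$; the target stays in $\A$, and projecting back keeps us in $\pi(\A)$ — and since from any element of $\pi(\A)$ we can first pass (inside $\pi(\A)$, using the argument just given) to a surviving representative, $\pi(\A)$ has no escaping transitions at all. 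A trap set contains an attractor, so there is an attractor $\B \subseteq \pi(\A)$ of $\tf$, proving the first claim. The main obstacle, and the place I would be most careful, is the bookkeeping in the "easy direction": verifying that flipping a non-$n$ coordinate commutes appropriately with $\S^n$ and with the asynchronous edge condition, so that transitions of $\AD(\tf)$ genuinely lift to transitions of $\AD(f)$ between surviving states. This is exactly the content cited from \cite{naldi2011dynamically}, so if one is willing to quote it as a black box the proof is short; otherwise this compatibility check is the one routine-but-essential computation to push through.
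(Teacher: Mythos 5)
Your proof is correct and follows essentially the same route as the paper: the first claim reduces to the observation that the projection $\pi$ of a trap set of $\AD(f)$ is a trap set of $\AD(\tf)$ (which you prove rather than cite), and the second claim uses the edge from $\overline{\S^n(x)}^n$ to $\S^n(x)$ together with the trap-set property of $\A$. One side remark in your ``easy direction'' is inaccurate --- $\S^n(\bar{x}^i)$ and $\overline{\S^n(x)}^i$ need not agree, since $f_n$ may depend on coordinate $i$, so the lifted edge can land on a non-surviving state --- but your main argument never actually uses this identification, only that the lifted transition's target lies in $\A$ and projects to $\bar{x}^i$.
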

\begin{proof}
  The first part is a consequence of the observation that, if $B$ is a trap set for $\AD(f)$, then $\pi(B)$ is a trap set for $\AD(\tf)$.

  Given a state $x$ in an attractor for $\tf$ contained in $\pi(\A)$, by definition of $\pi$ either $\S^n(x)$ or $\overline{\S^n(x)}^n$ is in $\A$, and since $\S^n(x)$ is reachable from $\overline{\S^n(x)}^n$, $\S^n(x)$ must be in $\A$.
\end{proof}

\begin{figure}[t!]
\centering
\tikzset{
    circ/.pic={\fill (0,0) circle (1.2pt);}
}
\begin{tikzpicture} 
\draw[thick] (0,0) rectangle (5.2,3);
\coordinate (a1) at (0,1);
\coordinate (a2) at (2,0.2);
\coordinate (a3) at (4.2,1.5);
\coordinate (a4) at (4.0,1.95);
\coordinate (a5) at (3,2.6);
\coordinate (a6) at (2.4,2.95);
\coordinate (a7) at (1,2);
\draw[use Hobby shortcut,thick] ([out angle=-90]a1)..(a2)..([in angle=-90]a3); 
\draw[use Hobby shortcut,thick] ([out angle=90]a3)..(a4)..(a5)..(a6)..(a7)..([in angle=90]a1);

\coordinate (f) at (2.5,3.5);
\node at (f) {$f\colon\mathbb{B}^n\to\mathbb{B}^n$};

\coordinate (A) at (3.8,2.6);
\node at (A) {$\mathcal{A}$};

\coordinate (sx1) at (4.8,0.7);
\coordinate (sx2) at (3.8,0.7);
\draw (sx1) pic {circ};
\draw (sx2) pic {circ};
\node[xshift=-0.4cm,yshift=0.1cm] at (sx2) {\scriptsize $\mathcal{S}(x)$};
\draw[->,>=stealth,thick] (sx1) -- (sx2);

\draw[->,thick] (6,1.5) -- (8,1.5);
\node at (7,2.3) {\scriptsize variable};
\node at (7,2) {\scriptsize elimination};

\draw[thick] (8.5,0) rectangle (8.5 + 2.8,3);
\coordinate (ft) at (8.5 + 1.4,3.5);
\node at (ft) {$\tilde{f}\colon\mathbb{B}^{n-1}\to\mathbb{B}^{n-1}$};

\coordinate (b1) at (8.5+0,1/2);
\coordinate (b2) at (8.5+2/2,0.2/2);
\coordinate (b3) at (8.5+4.8/2,1.5/2);
\coordinate (b4) at (8.5+4.4/2,1.95/2);
\coordinate (b5) at (8.5+3.3/2,2.6/2);
\coordinate (b6) at (8.5+2.4/2,2.95/2);
\coordinate (b7) at (8.5+1/2,2/2);

\draw[use Hobby shortcut,thick] ([out angle=-90]b1)..(b2)..([in angle=-90]b3); 
\draw[use Hobby shortcut,thick] ([out angle=90]b3)..(b4)..(b5)..(b6)..(b7)..([in angle=90]b1);

\coordinate (c1) at (8.5+0,1.5+1/2);
\coordinate (c2) at (8.5+2/2,1.5+0.2/2);
\coordinate (c3) at (8.5+4.8/2,1.5+1.5/2);
\coordinate (c4) at (8.5+4.4/2,1.5+1.95/2);
\coordinate (c5) at (8.5+3.3/2,1.5+2.6/2);
\coordinate (c6) at (8.5+2.4/2,1.5+2.95/2);
\coordinate (c7) at (8.5+1/2,1.5+2/2);

\draw[use Hobby shortcut,thick] ([out angle=-90]c1)..(c2)..([in angle=-90]c3); 
\draw[use Hobby shortcut,thick] ([out angle=90]c3)..(c4)..(c5)..(c6)..(c7)..([in angle=90]c1);

\coordinate (sx3) at (8.5+4.8/2,0.6);
\draw (sx3) pic {circ};
\node[xshift=+0.2cm,yshift=0.1cm] at (sx3) {\scriptsize $x$};

\draw [->,dashed] (sx3) to [bend right=-12] (sx1);
\draw [->,dashed] (sx3) to [bend right=-14] (sx2);
\end{tikzpicture} 
\caption{Idea behind the approach: states in attractors of reduced networks $\tf$ can be used to find candidate states in attractors of $f$.}\label{fig:idea}
\end{figure}
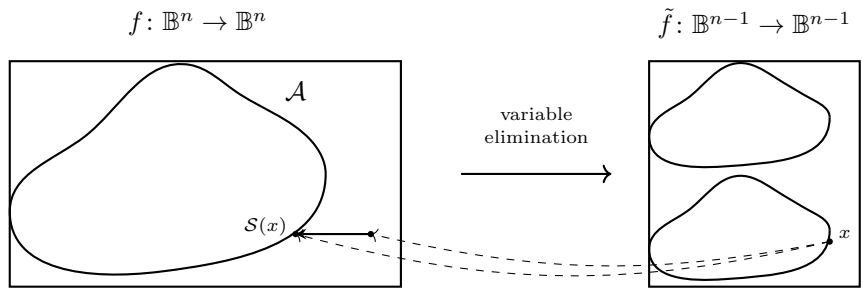

The theorem gives us the following property: if there exists an attractor for $f$, then we are able to identify one state of this attractor by finding the attractors of the reduced version $\tf$, sampling a state from each of these attractors, and ``lifting'' the states to the original space using $\S^n$ (\cref{fig:idea}).

Before we discuss more in detail how we can use the reduced network to identify attractors of the original network, let us look at some examples,
which can give an idea of what can happen to attractors with variable elimination.

\begin{example}\label{ex:Sx-needed}
Consider the map $f$ from \cref{ex:fgh}.
Since the second variable is not autoregulated, it can be removed using the elimination method described in this section. The constant value $0$ replaces $x_2$ in the update function of $x_1$, giving the reduced network $\tf(x_1) = \bar{x}_1$.
We have $\S^2(0)=\S^2(00)$ and $\S^2(1)=\S^2(10)$.
This simple example is sufficient to illustrate how, given a state $x$ in the attractor of a reduced network (say $x=0$ here), to retrieve a state in an attractor of the original network we cannot pick any state in the preimage $\pi^{-1}(x)$ (the state $01$ does not work), but we need to apply the function $\S^2$.
\end{example}

\begin{figure}[t!]
\resizebox{\textwidth}{!}{
\tikzcdset{arrow style=tikz, diagrams={>=stealth'}}
\begin{tikzpicture}[mybox/.style={draw, inner sep=5pt}]

\node[mybox,scale=1.1] at (-8,0) {
\begin{tikzcd}[row sep=small,column sep=small,ampersand replacement=\&]
011 \arrow[dr,yshift=1] \& \& \& 111 \arrow[lll] \arrow[ld] \arrow[ddd] \\
\& 001 \arrow[lu,yshift=-1] \arrow[d] \arrow[r,yshift=-1] \& 101 \arrow[l,yshift=1] \& \\
\& 000 \arrow[r,yshift=1] \& 100 \arrow[u] \arrow[l,yshift=-1] \arrow[rd,yshift=-1]  \& \\
010 \arrow[ru] \arrow[uuu] \arrow[rrr] \& \& \& 110 \arrow[ul,yshift=1]
\end{tikzcd}
};

\node[mybox,scale=0.893] {
\begin{tikzcd}[row sep=small,column sep=small,ampersand replacement=\&]
0110 \arrow[rrr] \arrow[dr,yshift=1] \& \& \& 1110 \arrow[dl,yshift=1] \arrow[rrrr,bend left=15] \& 0111 \arrow[llll,bend right=15] \& \& \& 1111 \arrow[lll] \\
\& 0010 \arrow[d,xshift=1] \arrow[ul,yshift=-1] \& 1010 \arrow[rrrr,bend left=15] \arrow[ur,yshift=-1] \arrow[d,xshift=1] \arrow[l] \& \& \& 0011 \arrow[d,xshift=-1] \arrow[r] \arrow[llll,bend right=15] \& 1011 \arrow[d,xshift=-1] \&\\
\& 0000 \arrow[u,xshift=-1] \& 1000 \arrow[l] \arrow[u,xshift=-1] \arrow[rrrr,bend right=15] \& \& \& 0001 \arrow[r] \arrow[dl,yshift=1] \arrow[llll,bend left=15] \arrow[u,xshift=1] \& 1001 \arrow[u,xshift=1] \arrow[dr,yshift=1] \&\\
0100 \arrow[rrr] \& \& \& 1100 \arrow[rrrr,bend right=15] \& 0101 \arrow[ur,yshift=-1] \arrow[llll,bend left=15] \& \& \& 1101 \arrow[lll] \arrow[ul,yshift=-1]
\end{tikzcd}
};

\node[mybox,scale=1.1] at (8,0) {
\begin{tikzcd}[row sep=small,column sep=small,ampersand replacement=\&]
011 \arrow[dr] \arrow[ddd] \& \& \& 111 \arrow[dl] \arrow[lll] \\
\& 001 \arrow[d] \& 101 \arrow[d] \arrow[l] \& \\
\& 000 \& 100 \arrow[dr] \& \\
010 \arrow[rrr] \& \& \& 110 \arrow[uuu]
\end{tikzcd}
};
\end{tikzpicture}
}
\caption{Asynchronous state transition graphs of the Boolean networks $\hat{f}$, $\hat{g}$, $\hat{h}$ of \cref{ex:attractors-can-split,ex:attractors-can-increase,ex:not-all-states}.}\label{fig:exs}
\end{figure}
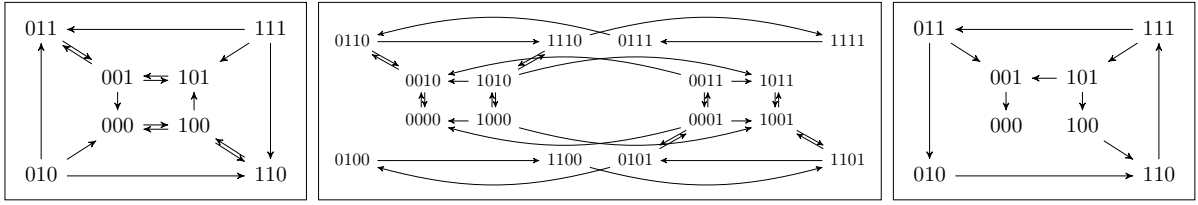

\begin{example}\label{ex:not-all-states}
If $\A$ is an attractor for a Boolean network and $x$ is a state in $\A$, does $\pi(x)$ necessarily belong to an attractor of the reduced network?
The answer is negative.
Take for instance
$$\hat{f}(x_1,x_2,x_3) =
(\bar{x}_1 \bar{x}_2 \vee \bar{x}_1 \bar{x}_3 \vee x_2 \bar{x}_3,
x_1 \bar{x}_2 \bar{x}_3 \vee \bar{x}_1 \bar{x}_2 x_3,
x_1 \bar{x}_2 \vee \bar{x}_1 x_2).$$
By removing the third component, we obtain the function $f$ of \cref{ex:fgh}.
The state $011$ is in the unique attractor of $\hat{f}$, while $\pi(011)=01$ does not belong to any attractor of $f$.

In general, therefore, we are not able to retrieve all states of an attractor of a Boolean network by lifting states in attractors of its reduction.
We will only find some states, from which we can visit the attractor if required.
\end{example}

\begin{example}\label{ex:attractors-can-split}
The Boolean network
$$\hat{g}(x_1,x_2,x_3,x_4) = (x_2 \bar{x}_4 \vee \bar{x}_2 x_4,
                              x_4 (x_2 x_3 \vee \bar{x}_2 \bar{x}_3) \vee \bar{x}_4 (x_2 \bar{x}_3 \vee x_3 \bar{x}_2),
                              x_2 x_3 \vee \bar{x}_2 \bar{x}_3,
                              x_1)$$
has one cyclic attractor, that fills the whole state space (see \cref{fig:exs}).
By removing variable $x_4$ we obtain the network $g$ in \cref{ex:fgh},
which has two attractors.
\end{example}

\begin{example}\label{ex:attractors-can-increase}
The Boolean network
$$\hat{h}(x_1, x_2, x_3) = (x_1\bar{x}_3 \vee x_2\bar{x}_3, x_1\bar{x}_3 \vee x_2\bar{x}_3, x_1x_2)$$
has a unique attractor, the fixed point 000.
By eliminating variable $x_3$ we obtain the Boolean network $h$ in \cref{ex:fgh}, which has two attractors.
\end{example}

Suppose that $\tf \colon 2^m \to 2^m$ is obtained from $f$ by iteratively eliminating variables $n, n-1, \dots, m+1$, and that $\A_1,\dots,\A_M$ are attractors of $\tf$.
Take one state $x^1,\dots,x^M \in 2^m$ in each attractor. We can reconstruct the corresponding states in $2^n$ by applying the map $\S = \S^n \circ \S^{n-1} \circ \cdots \circ \S^{m+1}$. How can we establish whether each of these states is in an attractor of $f$, and how many attractors $f$ has?

We can calculate the minimal trap spaces of $f$, and make the following observations.
Given a set of candidate states $C = \{\S(x^1), \dots, \S(x^M)\}$:

\begin{itemize}
  \item[(a)] If $x$ is a steady state of $\tf$, then $\S(x)$ is a steady state of $f$ (and all steady states of $f$ can be calculated in this fashion).
  \item[(b)] if $\S(x)$ belongs to a minimal trap space $t$ of $f$, and is the only state in $C$ that belongs to $t$, then $\S(x)$ belongs to an attractor of $f$ that is minimal univocal.
We call these \emph{univocal states}.
  \item[(c)] if $\S(x)$ is contained in a minimal trap space $t$ of $f$, and is not the only state in $C$ that is contained in this minimal trap space, then we need to study the dynamics in $t$ to clarify whether each candidate state contained in $t$ belongs to an attractor, and whether the candidate states in $t$ belong to different attractors. We can call these states \emph{candidate nonunivocal}. The number of states in $C$ contained in $t$ gives an upper bound on the number of attractors contained in $t$.
  \item[(d)] if $\S(x)$ is not contained in any minimal trap space, then $\S(x)$ \emph{might} belong to a nonminimal attractor of $f$. To establish whether this is the case, and to find the number of nonminimal attractors, we need to do additional work. We refer to these states as \emph{candidate nonminimal}.
\end{itemize}

Note that these observations would have to be slightly changed were one to consider the elimination of negatively autoregulated components \cite{schwieger2023reduction}.
We can now give the description of the approach to the identification of attractors based on elimination of components.

\section{Method}

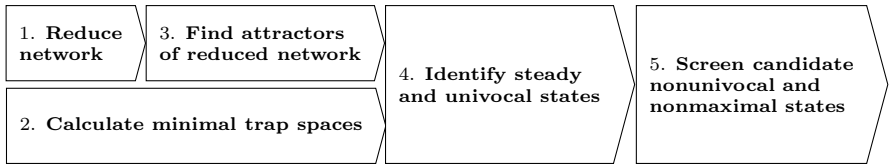
\begin{figure}[t!]
  \centering
  \tikzset{resarrow/.style={draw,
  minimum height=1.0cm,
  inner sep=0.5em,
  shape=signal,
  signal to=east,
  signal pointer angle=150,
  text centered,
  draw=black, fill=white,
  font=\scriptsize
  }
  }
  \tikzset{tallarrow/.style={draw,
  minimum height=2.1cm,
  inner sep=0.5em,
  shape=signal,
  signal to=east,
  signal pointer angle=150,
  text centered,
  draw=black, fill=white,
  font=\scriptsize
  }
  }
  \begin{tikzpicture}
    \node[resarrow, align=left] (c) at (2,0) {1. \textbf{Reduce}\\\textbf{network}};
    \node[resarrow, align=left] (b) at (4.52,0) {3. \textbf{Find attractors}\\
                                                \textbf{of reduced network}};
    \node[resarrow, align=left] (a) at (3.6,-1.1) {2. \textbf{Calculate minimal trap spaces\hspace{0.0cm}}};
    \node[tallarrow, align=left] (a) at (7.7,-0.55) {4. \textbf{Identify steady}\\
                                                    \textbf{and univocal states}};
    \node[tallarrow, align=left] (a) at (11.05,-0.55) {5. \textbf{Screen candidate}\\
                                                    \textbf{nonunivocal and}\\
                                                    \textbf{nonmaximal states}};
  \end{tikzpicture}\caption{Main steps of the algorithm.}\label{fig:steps}
\end{figure}

Based on the analysis of the relationship between attractors of reduced versions of a Boolean network and attractors of the original network, we propose the following pipeline (see \cref{fig:steps}).

First reduce the network $f$ by eliminating variables (step 1), and, possibly in parallel, find the minimal trap spaces of the Boolean network $f$ (step 2).
Then, identify one state for each attractor of the reduced the network $\tf$ that is not a steady state (step 3), obtaining a set of candidate attractor states for $f$.

Step 4 and 5 deal with the screening of these candidate attractor states.
Step 4 is the easy part: we check whether each candidate state is contained in a minimal trap space. If a minimal trap space contains only one candidate state, then this is a univocal state and we identified a univocal attractor for $f$ (point (b) of the last section).

Step 5 takes care of the remaining candidate states.
These can be either multiple states contained in the same minimal trap space (point (c) of the last section), or states that do not belong to any minimal trap space (point (d) of the last section).
In both scenarios, we have to study in some way the state transition graph to understand whether the candidate states belong to an attractor, and how many attractors exist (some candidate states might be part of the same attractor, which has been ``split'' during the reduction process, like in \cref{ex:attractors-can-split}, or part of an attractor that was created by the reduction process \cref{ex:attractors-can-increase}). For this step one can use model checking approaches (see the next section), possibly combined with other techniques (see the discussion in the last section).

\subsection{Implementation}\label{sec:implementation}
\subsubsection{Software}
For our implementation, we use the Python library \texttt{colomoto}'s \texttt{minibn}
\cite{naldi2018colomoto} to compute the network reduction (step 1 in \cref{fig:steps}).
In \texttt{minibn}, the local functions of the Boolean network are represented in propositional
logic formula, with usual Boolean algebra.
Given a component $i$ to reduce, our implementation simply substitutes $x_i$ with the expression of $f_i$ in
all its targets. Then, basic Boolean expression simplifications are performed, which may result in
variable elimination.

For step 2, we use \texttt{trappist} to calculate the minimal trap spaces \cite{trinh2022minimal}.
Because \texttt{trappist} relies on a Petri net representation of the Boolean network, this step
involves a transformation of each local Boolean function $f_i$ in two expressions in DNF form
(one for $f_i(x)=1$ and one for $f_i(x)=0$), from which are derived the Petri net transitions.

For step 3, the identification of attractors of the reduced network, we consider two recently developed methods, AEON \cite{benevs2022aeon} and mtsNFVS \cite{trinh2022computing}. In the analysis of \cite{trinh2022computing}, they have been shown to be the two fastest methods available, while implementing two very different approaches.
The outputs of the two methods are also quite different: AEON, for instance, can give information on the attractors (cardinality, list of states) that is not directly available with mtsNFVS, and can deal with families of networks at the same time, do postprocessing control tasks, etc. Here we are only looking at the performance of the methods in regard to the identification of the number of attractors and their nature (steady or cyclic), and we are mostly concerned with understanding whether reduction might be useful in this respect.

The pieces are put together and step 4 performed in \texttt{python}.
For step 5, we run the \texttt{java} tool \texttt{mtsNFVS.jar} available as part of mtsNFVS.
The tool uses \texttt{Pint} \cite{Pint} for a first check via static analysis followed by bounded and exact model checking \cite{van2021improved}.
We noticed however that mtsNFVS (without reduction) produces wrong results in several instances (some examples are provided in our project repository).
Hence, the results obtained with mtsNFVS should be taken with caution.
For the biological networks we considered, we checked that all methods found the same number of steady states and cyclic attractors.
As we will discuss in the Results section, the reachability analysis step has limited impact on the overall process, since nonminimal candidates appear only rarely, and no nonunivocal candidates are found in any of the biological or random tests.

\subsubsection{Elimination order and other considerations}

The choice of order for the variables to be eliminated has an influence on the reduced network as well as on the running times.
When a variable with $r$ regulators and $t$ targets is eliminated, the $r + t$ regulations could be replaced, in the worst case scenario, by $r \cdot t$ regulations. We therefore pick for elimination, at each step, one of the variables for which this product is minimum. A more systematic study of how the elimination order influences the running time could lead to identification of better elimination orders. For instance, one might try to favour elimination orders that have the least impact on the number of trap spaces.

Another crucial point when using variable elimination is deciding where to stop the iterative process. Unfortunately, there does not seem to be a simple answer to this question. In some cases, at some point in the elimination process reduction can introduce complicated update functions that can slow the processing steps that follow.
In addition, although reduction seems to be beneficial as a preliminary step for the methods we tested, we will see that different levels of reduction might favour different methods variably.
In our implementation we considered two possible parameters to stop the reduction process. One specifies the minimum size for the reduced network in terms of number of nodes (\texttt{stop\_at}). The other looks at the minimum product of number of regulators and number of targets for each node (that is, the parameter used to choose the variable to eliminate), and sets a maximum for this value (\texttt{max\_product}).

We also investigated the impact of the simplification of Boolean expressions on running times. We found that, although simplification is expensive, it leads to faster elimination.
In the next section we will discuss the impact of the elimination step on running times for both biological and random networks.

\subsection{Results}\label{sec:results}

The experiments on the biological network models were run on a desktop computer with an Intel(R) Core(TM) i7-8700 processor, 32GB of RAM, operating system Debian GNU/Linux 11. 
The experiments on the random networks were run on an HPC cluster nodes with an AMD(R) Zen2 EPYC 7702
@ 2 GHz, 1TB of RAM, operating system, operating system CentOS Linux.
The code is available at \url{https://github.com/etonello/attractors-with-reduction}.

\subsubsection{Biological models}

\begin{table}[t!]
\centering
\begin{tabular}{ l l l c c c c}
\hline
\thead{Model} & \thead{file name} & \thead{file source} & \thead{n.\\nodes} & \thead{n.\\edges} & \thead{n. steady\\states} & \thead{n. cyclic\\attractors}\\
\hline
MAPK  & grieco\_mapk                 & PyBoolNet &  53 & 108 &    12 &    6 \\
IL-6  & IL\_6\_Signalling            & mtsNFVS   &  55 &  95 & 28672 & 4096 \\
EMT   & selvaggio\_emt               & PyBoolNet &  56 & 159 &  1452 &    0 \\
T-LGL & TLGLSurvival                 & mtsNFVS   &  58 & 193 &   172 &  146 \\
CACC  & Colitis\_associated\_\dots   & mtsNFVS   &  66 & 144 &     2 &    8 \\
AD    & A\_model                     & mtsNFVS   &  74 & 198 &     0 &    2 \\
AGS   & id-148-AGS-CELL-FATE-\dots   & biodivine &  83 & 193 &     1 &    0 \\
CC    & cell\_cycle\_2019            & mtsNFVS   &  87 & 467 &     8 &    0 \\
SP    & id-192-SEGMENT-POLA\dots     & biodivine & 102 & 432 &    65 &    0 \\
SIPC  & SIGNALING-IN-PROST\dots      & mtsNFVS   & 116 & 428 &  2460 &  300 \\
DSP   & id-210-DRUG-SYNERGY-\dots    & biodivine & 144 & 367 &     0 &    1 \\
C3.0  & CASCADE3                     & mtsNFVS   & 176 & 449 &     0 &    1 \\
EP    & id-211-EPITHELIAL-DER\dots   & biodivine & 183 & 602 &     0 &    1 \\
\hline
\end{tabular}
\caption{Information on the sourcing of the bnet models considered, their size and the number of steady states and cyclic attractors.}\label{tab:bio-info}
\end{table}

\begin{table}[t!]
\centering
\begin{tabular}{ l c c c c c c c c c c c}
\hline
              & \multicolumn{3}{c}{(1) \texttt{max\_product=20}} & & \multicolumn{3}{c}{(2) \texttt{max\_product=50}} & & \multicolumn{3}{c}{(3) \texttt{max\_product=100}} \\\cline{2-4}\cline{6-8}\cline{10-12}
\thead{Model} & n. nodes & n. edges & time         & & n. nodes & n. edges & time         & & n. nodes & n. edges & time \\
\hline
MAPK  &   10 &   41 &    0.1 &   &   10 &   41 &    0.0 &   &   10 &   41 &    0.1 \\
IL-6  &   17 &   28 &    0.0 &   &   17 &   28 &    0.0 &   &   17 &   28 &    0.0 \\
EMT   &   19 &   94 &    0.1 &   &   17 &  130 &    0.1 &   &   17 &  130 &    0.1 \\
T-LGL &   21 &   91 &    0.0 &   &   18 &  111 &    0.0 &   &   18 &  111 &    0.0 \\
CACC  &   14 &   81 &    0.0 &   &   11 &   56 &    0.0 &   &   11 &   56 &    0.0 \\
AD    &   11 &   93 &    0.0 &   &   10 &   97 &    0.0 &   &   10 &   97 &    0.0 \\
AGS   &    2 &    4 &    0.0 &   &    2 &    4 &    0.0 &   &    2 &    4 &    0.0 \\
CC    &   38 &  415 &    0.2 &   &   35 &  490 &    0.3 &   &   29 &  669 &   23.9 \\
SP    &   35 &  234 &    0.0 &   &   33 &  273 &    0.1 &   &   32 &  357 &    0.1 \\
SIPC  &   41 &  390 &    0.3 &   &   32 &  465 &    0.6 &   &   28 &  522 &    8.7 \\
DSP   &   14 &  108 &    0.0 &   &   10 &   71 &    0.0 &   &   10 &   71 &    0.0 \\
C3.0  &   23 &  191 &    0.0 &   &   14 &  175 &    0.1 &   &   13 &  193 &    0.1 \\
EP    &   33 &  322 &    0.1 &   &   25 &  365 &    0.1 &   &   21 &  316 &    0.1 \\
\hline
\end{tabular}
\caption{Reduction scenarios considered for the biological models listed in \cref{tab:bio-info}.
\texttt{max\_product}$=k$ indicates that the elimination ends when the product of number of regulators and number of targets is above $k$ for all nodes. The elimination is also set to stop when the number of nodes reaches 10.}\label{tab:scenarios}
\end{table}

\begin{table}[t!]
\centering
\begin{tabular}{  l c c c c c c c c c }
    \hline
              & \multicolumn{4}{c}{AEON running times} & ~ & \multicolumn{4}{c}{mtsNFVS running times}
              \\\cline{2-5}\cline{7-10}
    \thead{Model} & No red. & (1) & (2) & (3) && No red. & (1) & (2) & (3) \\
\hline
MAPK  &    5.7 &    0.3 &    0.3 &    0.3 & &    28.9 $\scriptstyle\pm   5.7$ (3 DNF) &    0.7 &    0.7 &    0.7 \\
IL-6  &  774.6 &    6.0 &    6.0 &    6.0 & &    14.8 $\scriptstyle\pm   1.8$ &    7.4 &    7.4 &    7.4 \\
EMT   &   25.6 &    0.8 &    0.7 &    0.7 & &  DNF &    1.3 &    1.4 &    1.4 \\
T-LGL &   17.5 &    0.9 &    0.9 &    1.1 & &     2.2 &    1.8 &    1.9 &    2.7 \\
CACC  &    9.3 &    0.3 &    0.3 &    0.3 & &     0.5 &    0.8 &    0.7 &    0.7 \\
AD    &  361.9 &    0.3 &    0.4 &    0.4 & &     0.7 &    1.0 &    0.8 &    0.8 \\
AGS   &    1.7 &    0.3 &    0.3 &    0.3 & &     0.6 &    0.7 &    0.7 &    0.7 \\
CC    &    DNF &   27.0 &   11.1 &   26.9 & &     8.2 $\scriptstyle\pm   3.5$ &    2.2 &    6.0 &   39.4 \\
SP    &    DNF &    0.9 &    0.8 &    0.8 & &  DNF &    1.0 &    1.1 &    1.4 \\
SIPC  &    DNF &   28.2 &    6.9 &   14.4 & &  1664.7 $\scriptstyle\pm 506.3$ &   47.8 $\scriptstyle\pm   5.6$ &   53.7 $\scriptstyle\pm   7.4$ &  138.8 \\
DSP   &    DNF &    0.4 &    0.4 &    0.4 & &     2.3 &    0.8 &    0.7 &    0.7 \\
C3.0  &    DNF &    0.5 &    0.4 &    0.4 & &     2.1 &    1.0 &    1.0 &    0.9 \\
EP    &    DNF &    1.5 &    0.6 &    0.5 & &    62.7 $\scriptstyle\pm  64.2$ &    2.5 &    2.4 &    1.9 \\
\hline
\end{tabular}
\caption{Running times for AEON \cite{benevs2022aeon} and mtsNFVS \cite{trinh2022computing} without reduction and for the three reduction scenarios
described in \cref{tab:scenarios}. The running times in the three reduction scenarios include the
time for network reduction. Running times for mtsNFVS are averaged over five iterations,
and show the standard deviation if this is higher than 1s.
``DNF'' indicates that the processing did not complete within one hour.}\label{tab:times}
\end{table}

We run our implementation using AEON \cite{benevs2022aeon} and mtsNFVS \cite{trinh2022computing} to find the attractors of the reduced networks, and compared the running times to those of AEON and mtsNFVS applied on the networks without reduction.
We consider the seven models examined in \cite{trinh2022computing} as available in the tool repository, as well as additional models extracted from the PyBoolNet repository \cite{klarner2017pyboolnet} and the biodivine-boolean-models repository (\url{https://github.com/sybila/biodivine-boolean-models}).
Information on the source of each model, as well as the size of the networks is given in \cref{tab:bio-info}.
We refer the reader to the respective sources for references detailing each model, which explain if and how the models have been modified from their original sources.

As we mentioned in the previous section, different levels of reduction can have different impacts on running times of both the elimination procedure and the remainder of the attractor identification process.
Here we report running times obtained in three scenarios, where we set the parameter \texttt{max\_product} described in the previous section (maximum value accepted for the minimum product of in- and out-regulations over all nodes) to three levels, 20 in scenario (1), 50 in scenario (2) and 100 in scenario (3).
In all three cases, we set the minimum size of the reduced network to 10.
In some cases these parameters lead to network with fewer than 10 nodes, because constants variables might be generated by the elimination process, and these are always eliminated.
In \cref{tab:scenarios} we show the number of nodes and signed regulations of the reduced network obtained in each of the three scenarios, as well as the time required by the reduction process. The time for reduction is below one second except for two networks, CC and SIPC, where it goes up to 24 and 9 seconds.

\cref{tab:times} shows the running times for the two methods and the different scenarios (no reduction, reduction scenario (1), (2) and (3)).
The algorithm of mtsNFVS relies on some random choices, which cause its running times to vary, sometimes significantly. We report the minimum and maximum running times obtained on five iterations. We set a timeout of one hour, and ``DNF'' indicates that the processing did not complete within this time.

It is clear from the results that approaching the attractor identification problem by first reducing the network can speed up the processing significantly.
We can also observe that a more aggressive reduction does not necessarily translate to faster attractor identification times.
This can be observed for both attractor identification methods for the networks CC and SIPC.
Looking at the numbers for these two networks, we can also see that the reduction levels giving the shortest processing times are not the same for AEON and mtsNFVS.
In particular, mtsNFVS seems to work better with more conservative reduction levels, and in one case (network CC, scenario (3)) the time required for the reduction process was higher than the running time for mtsNFVS without reduction.

Importantly, only for one of the networks (TLGL) nonminimal candidates were identified, and no network presented non-univocal candidates,
meaning that reduction had overall a limited impact on the general attractor configuration.
Looking at what happens on randomly-generated networks might help to clarify whether these behaviours could be specific of biological models or more general.

\subsubsection{Random benchmarks}

As in \cite{trinh2022computing}, we generate random networks by invoking \texttt{generateRandomNKNetwork} from the R package BoolNet (\cite{mussel2010boolnet}), fixing $K = 2$ for the number of regulators for each variable.
We consider networks with $n = 100 k$ nodes, with $k=1,\dots,5$, and create 10 networks for each size.

We tested several stopping conditions for the reduction process, and did not identify a general rule that would give optimal times in all cases.
To give an idea of the range of possible results, we report running times for three reduction scenarios, where we set the parameter \texttt{max\_product} to $\frac{n}{2}$, $n$ and $2n$,
while \texttt{stop\_at} is set to $\frac{n}{10}$.
The details of the three scenarios are shown in \cref{tab:scenarios-n-k}.
In \cref{tab:times-n-k}, we show the average running times for networks that were processed within the timeout of one hour, and the number of networks that were not processed in time.

We observe again that, by adopting the reduction approach, the number of networks that can be successfully analysed within the timeout given increases significantly.
AEON without reduction could process 4 out of the 10 networks of size 100, and no network of larger size.
With reduction and AEON, all networks of size 200 and 8 out of the 10 networks of size 300 could be processed,
as well as one network of size 400.
The running times are just a few seconds for networks of size 100, and vary significantly for larger networks.

The tool mtsNFVS without reduction could process all networks of size 100 and some networks of size 200 and 300.
With reduction, networks of up to size 500 could be handled.
However, as we pointed out in \cref{sec:implementation}, the results generated by mtsNFVS might require further validation, as several failures were identified in test networks.

We can nevertheless, by observing the results obtained with AEON, note that the number of candidate states that need processing via reachability analysis remains very limited.
Only five nonminimal candidate states were identified for the networks that were processed with reduction and AEON, and no nonunivocal candidate states.
This suggests that nonminimal and nonunivocal attractors might be rare phenomena, even for random networks.

Reduction allowed the size of networks to be drastically reduced, but the results shown in \cref{tab:scenarios-n-k} illustrate how different number of reduction steps might affect networks in different ways.
Some networks could only be processed in time in scenario (3), with the highest number of eliminated variables; for others, processing times were lower when a smaller number of variables was removed.

\begin{table}[t!]
\centering
\begin{tabular}{ c c c c c c c c c c c c}
\hline
              & \multicolumn{3}{c}{(1) \texttt{max\_product=n/2}} & & \multicolumn{3}{c}{(2) \texttt{max\_product=n}} & & \multicolumn{3}{c}{(3) \texttt{max\_product=2n}} \\\cline{2-4}\cline{6-8}\cline{10-12}
\thead{Model\\size} & \thead{average\\n. nodes} & \thead{average\\n. edges} & \thead{average\\time} && \thead{average\\n. nodes} & \thead{average\\n. edges} & \thead{average\\time} && \thead{average\\n. nodes} & \thead{average\\n. edges} & \thead{average\\time} \\
\hline
100 &    14.0 &   219.9 &     0.1 &&    12.3 &   213.0 &     0.5 &&    11.9 &   199.5 &     0.6 \\
200 &    24.0 &   621.9 &     0.5 &&    21.2 &   618.0 &     1.9 &&    20.7 &   596.3 &     5.8 \\
300 &    37.6 &  1418.5 &     2.1 &&    34.4 &  1527.5 &     7.8 &&    30.0 &  1240.0 &    19.7 \\
400 &    49.4 &  2266.0 &     2.6 &&    42.5 &  2471.0 &    23.1 &&    41.1 &  2498.3 &    55.3 \\
500 &    61.5 &  3489.8 &    18.4 &&    55.3 &  3985.7 &    18.8 &&    55.5 &  4683.5 &   286.5 \\
\hline
\end{tabular}
\caption{Statistics for three reduction scenarios on 10 random networks (generated with BoolNet \cite{mussel2010boolnet}, setting $K = 2$). \texttt{max\_product}$=k$ indicates that the elimination ends when the product of number of regulators and number of targets is above $k$ for all nodes. The elimination is also set to stop when one-tenth of the number of nodes of the original network is reached.}\label{tab:scenarios-n-k}
\end{table}

\begin{table}[t!]
\centering
\begin{tabular}{ c c c c c c c c c c}
\hline
& \multicolumn{4}{c}{AEON running times} \\\hline
\thead{Model\\size} & No red. & (1) & (2) & (3)\\
\hline
100 &   1104.6 $\scriptstyle\pm 676.0$        (6) &      2.3 $\scriptstyle\pm  1.3$ &      3.0 $\scriptstyle\pm  1.5$ &      3.2 $\scriptstyle\pm  1.5$ \\
200 &  (10) &     83.7 $\scriptstyle\pm 205.7$ &     48.9 $\scriptstyle\pm 113.7$ &    146.1 $\scriptstyle\pm 402.4$ \\
300 &  (10) &   1331.3 $\scriptstyle\pm 1394.7$        (5) &    696.0 $\scriptstyle\pm 684.1$        (3) &    888.9 $\scriptstyle\pm 1185.0$        (2) \\
400 &  (10) &         (10) &   2994.6        (9) &   3033.2        (9) \\
500 &  (10) &         (10) &         (10) &         (10) \\
\hline
\end{tabular}

\begin{tabular}{ c c c c c c c c c c}
\hline
& \multicolumn{4}{c}{mtsNFVS running times} \\\hline
\thead{Model\\size} & No red. & (1) & (2) & (3) \\
\hline
100 &      3.3 &      6.9 $\scriptstyle\pm  2.0$ &      7.6 $\scriptstyle\pm  2.1$ &      7.7 $\scriptstyle\pm  2.0$ \\
200 &    106.8 $\scriptstyle\pm 476.8$        (11) &    378.4 $\scriptstyle\pm 592.9$        (5) &    314.5 $\scriptstyle\pm 629.4$        (7) &    379.8 $\scriptstyle\pm 654.0$        (5) \\
300 &     71.7 $\scriptstyle\pm 90.5$        (25) &   1440.8        (49) &   2298.6 $\scriptstyle\pm 653.3$        (44) &   3305.8 $\scriptstyle\pm 432.1$        (46) \\
400 & (50) &   2059.3 $\scriptstyle\pm 644.0$        (26) &   2639.5 $\scriptstyle\pm 718.3$        (43) &   2239.0 $\scriptstyle\pm 482.2$        (41) \\
500 & (50) &   1672.3 $\scriptstyle\pm 562.8$        (44) &         (50) &   1276.6 $\scriptstyle\pm 418.3$        (46) \\
\hline
\end{tabular}
\caption{Average running times for AEON \cite{benevs2022aeon} and mtsNFVS \cite{trinh2022computing}, on random networks without reduction and in the three reduction scenarios of \cref{tab:scenarios-n-k}. In parentheses is the number of processes that did not terminate within one hour. Since running times for mtsNFVS have a high variability, we run each test five times. The mean and standard deviation shown are over all successful tests for the given size.}\label{tab:times-n-k}
\end{table}

\section{Discussion}\label{sec:discussion}

We investigated how reduction can make the process of attractor identification faster.
We observed that reduction generally makes the process easier, but the impact can depend on the adopted attractor detection approach.
Elimination of variables, while reducing the size of the network, increases the complexity of the influence graph and update functions.
This, on its turn, has a different impact on the two tools we tested, AEON \cite{benevs2022aeon} and mtsNFVS \cite{trinh2022computing},
which implement very different approaches to identification of attractors.
Deeper investigations targeted on the specific tool might give some insights on why certain levels of reduction work better than others.
At the same time, we did limited analysis on the impact of the order of elimination;
we cannot exclude that specific orders of elimination might be devised to better suit a specific method
(for instance, for mtsNFVS, one might want to investigate orders that do not increase the size of minimal feedback vertex sets).

Although candidate states that require reachability analysis are only rarely encountered,
this step might benefit from additional developments.
One improvement would come from the screening of nonunivocal candidate states,
as the technique of mtsNFVS currently can fail to detect the existence of multiple attractors contained in the same minimal trap space.
In addition, other techniques could be incorporated for the exclusion of existence of nonmaximal attractors.
For instance, at the moment only minimal trap spaces are used.
Larger trap spaces (the maximal trap spaces that do not contain the candidate state) could provide bigger targets for reachability analysis, while still being easily identifiable with a small modification to the approach of \texttt{trappist} \cite{trinh2022minimal}.

Finally, attractor detection tools are capable of other tasks, for instance, AEON can perform detection of bifurcations and source-target control.
There is the potential that reduction could be used sensibly to speed up these activities too.
Each task needs to be studied individually and carefully for the implications of variable elimination.

\subsubsection*{Acknowledgements}
ET was supported by the Deutsche Forschungsgemeinschaft (DFG) under Germany's Excellence Strategy – The Berlin Mathematics Research Center MATH+ (EXC-2046/1, project ID 390685689).
LP was supported by the French Agence Nationale pour la Recherche (ANR) in the scope of the
project ``BNeDiction'' (grant number ANR-20-CE45-0001).
Experiments presented in this paper were carried out using the PlaFRIM experimental testbed, supported by Inria, CNRS (LABRI and IMB), Université de Bordeaux, Bordeaux INP and Conseil Régional d’Aquitaine (see https://www.plafrim.fr).

\bibliography{biblio}

\begin{thebibliography}{10}

\bibitem{benevs2022aeon}
Nikola Bene{\v{s}}, Lubo{\v{s}} Brim, Ond{\v{r}}ej Huvar, Samuel Pastva, David
  {\v{S}}afr{\'a}nek, and Eva {\v{S}}mij{\'a}kov{\'a}.
\newblock {AEON.py: Python library for attractor analysis in asynchronous
  Boolean networks}.
\newblock {\em Bioinformatics}, 38(21):4978--4980, 2022.

\bibitem{benevs2021computing}
Nikola Bene{\v{s}}, Lubo{\v{s}} Brim, Samuel Pastva, and David
  {\v{S}}afr{\'a}nek.
\newblock {Computing bottom SCCs symbolically using transition guided
  reduction}.
\newblock In {\em Computer Aided Verification: 33rd International Conference,
  CAV 2021, Virtual Event, July 20--23, 2021, Proceedings, Part I 33}, pages
  505--528. Springer, 2021.

\bibitem{klarner2015approximating}
Hannes Klarner and Heike Siebert.
\newblock {Approximating attractors of Boolean networks by iterative CTL model
  checking}.
\newblock {\em Frontiers in bioengineering and biotechnology}, 3:130, 2015.

\bibitem{klarner2017pyboolnet}
Hannes Klarner, Adam Streck, and Heike Siebert.
\newblock {PyBoolNet: a python package for the generation, analysis and
  visualization of boolean networks.}
\newblock {\em Bioinformatics}, 33(5):770--772, 2017.

\bibitem{montagud22}
Arnau Montagud, Jonas B{\'{e}}al, Luis Tobalina, Pauline Traynard, Vigneshwari
  Subramanian, Bence Szalai, R{\'{o}}bert Alföldi, L{\'{a}}szl{\'{o}}
  Pusk{\'{a}}s, Alfonso Valencia, Emmanuel Barillot, Julio Saez-Rodriguez, and
  Laurence Calzone.
\newblock {Patient-specific Boolean models of signalling networks guide
  personalised treatments}.
\newblock {\em {eLife}}, 11, 2022.

\bibitem{moon2022computational}
Kyungduk Moon, Kangbok Lee, and Lo{\"\i}c Paulev{\'e}.
\newblock {Computational Complexity of Minimal Trap Spaces in Boolean
  Networks}.
\newblock {\em arXiv preprint arXiv:2212.12756}, 2022.

\bibitem{mussel2010boolnet}
Christoph M{\"u}ssel, Martin Hopfensitz, and Hans~A Kestler.
\newblock {BoolNet—an R package for generation, reconstruction and analysis
  of Boolean networks}.
\newblock {\em Bioinformatics}, 26(10):1378--1380, 2010.

\bibitem{naldi2018colomoto}
Aur{\'e}lien Naldi, C{\'e}line Hernandez, Nicolas Levy, Gautier Stoll, Pedro~T
  Monteiro, Claudine Chaouiya, Tom{\'a}{\v{s}} Helikar, Andrei Zinovyev,
  Laurence Calzone, Sarah Cohen-Boulakia, et~al.
\newblock {The CoLoMoTo interactive notebook: accessible and reproducible
  computational analyses for qualitative biological networks}.
\newblock {\em Frontiers in physiology}, 9:680, 2018.

\bibitem{naldi2009reduction}
Aur{\'e}lien Naldi, Elisabeth Remy, Denis Thieffry, and Claudine Chaouiya.
\newblock A reduction of logical regulatory graphs preserving essential
  dynamical properties.
\newblock In {\em International Conference on Computational Methods in Systems
  Biology}, pages 266--280. Springer, 2009.

\bibitem{naldi2011dynamically}
Aur{\'e}lien Naldi, Elisabeth Remy, Denis Thieffry, and Claudine Chaouiya.
\newblock {Dynamically consistent reduction of logical regulatory graphs}.
\newblock {\em Theoretical Computer Science}, 412(21):2207--2218, 2011.

\bibitem{naldi2022linear}
Aur{\'e}lien Naldi, Adrien Richard, and Elisa Tonello.
\newblock {Linear cuts in {B}oolean networks}.
\newblock {\em arXiv preprint arXiv:2203.01620}, 2022.

\bibitem{Pint}
Lo{\"i}c Paulev{\'e}.
\newblock {Pint: A Static Analyzer for Transient Dynamics of Qualitative
  Networks with IPython Interface}.
\newblock In {\em {CMSB 2017 - 15th conference on Computational Methods for
  Systems Biology}}, volume 10545 of {\em Lecture Notes in Computer Science},
  pages 370 -- 316. Springer, 2017.

\bibitem{PS2022}
Lo{\"i}c Paulev{\'e} and Sylvain Sen{\'e}.
\newblock {Boolean networks and their dynamics: the impact of updates}.
\newblock In {\em {Systems Biology Modelling and Analysis: Formal
  Bioinformatics Methods and Tools}}. {Wiley}, 2022.

\bibitem{richard2023attractor}
Adrien Richard and Elisa Tonello.
\newblock {Attractor separation and signed cycles in asynchronous Boolean
  networks}.
\newblock {\em Theoretical Computer Science}, page 113706, 2023.

\bibitem{rozum2021parity}
Jordan~C Rozum, Jorge G{\'o}mez Tejeda~Za{\~n}udo, Xiao Gan, D{\'a}vid Deritei,
  and R{\'e}ka Albert.
\newblock {Parity and time reversal elucidate both decision-making in empirical
  models and attractor scaling in critical {B}oolean networks}.
\newblock {\em Science Advances}, 7(29), 2021.

\bibitem{Schwab2021}
Julian~D. Schwab, Nensi Ikonomi, Silke~D. Werle, Felix~M. Weidner, Hartmut
  Geiger, and Hans~A. Kestler.
\newblock Reconstructing boolean network ensembles from single-cell data for
  unraveling dynamics in the aging of human hematopoietic stem cells.
\newblock {\em Computational and Structural Biotechnology Journal},
  19:5321--5332, 2021.

\bibitem{schwieger2023reduction}
Robert Schwieger and Elisa Tonello.
\newblock {Reduction for asynchronous {B}oolean networks: elimination of
  negatively autoregulated components}.
\newblock {\em arXiv preprint arXiv:2302.03108}, 2023.

\bibitem{stoll2017}
Gautier Stoll, Barth{\'{e}}l{\'{e}}my Caron, Eric Viara, Aur{\'{e}}lien
  Dugourd, Andrei Zinovyev, Aur{\'{e}}lien Naldi, Guido Kroemer, Emmanuel
  Barillot, and Laurence Calzone.
\newblock {MaBoSS} 2.0: an environment for stochastic boolean modeling.
\newblock {\em Bioinformatics}, 33(14):2226--2228, 2017.

\bibitem{trinh2022minimal}
Van-Giang Trinh, Belaid Benhamou, Kunihiko Hiraishi, and Sylvain Soliman.
\newblock {Minimal trap spaces of Logical models are maximal siphons of their
  Petri net encoding}.
\newblock In {\em Computational Methods in Systems Biology: 20th International
  Conference, CMSB 2022, Bucharest, Romania, September 14--16, 2022,
  Proceedings}, pages 158--176. Springer, 2022.

\bibitem{trinh2022computing}
Van-Giang Trinh, Kunihiko Hiraishi, and Belaid Benhamou.
\newblock {Computing attractors of large-scale asynchronous boolean networks
  using minimal trap spaces}.
\newblock In {\em Proceedings of the 13th ACM International Conference on
  Bioinformatics, Computational Biology and Health Informatics}, pages 1--10,
  2022.

\bibitem{van2021improved}
Trinh Van~Giang and Kunihiko Hiraishi.
\newblock {An Improved Method for Finding Attractors of Large-Scale
  Asynchronous Boolean Networks}.
\newblock In {\em 2021 IEEE Conference on Computational Intelligence in
  Bioinformatics and Computational Biology (CIBCB)}, pages 1--9. IEEE, 2021.

\bibitem{veliz2011reduction}
Alan Veliz-Cuba.
\newblock {Reduction of {B}oolean network models}.
\newblock {\em Journal of Theoretical Biology}, 289:167--172, 2011.

\bibitem{zanudo21}
Jorge G{\'{o}}mez~Tejeda Za{\~{n}}udo, Pingping Mao, Clara Alcon, Kailey
  Kowalski, Gabriela~N. Johnson, Guotai Xu, Jose Baselga, Maurizio Scaltriti,
  Anthony Letai, Joan Montero, R{\'{e}}ka Albert, and Nikhil Wagle.
\newblock Cell line-specific network models of {ER+} breast cancer identify
  potential {PI}3ka inhibitor resistance mechanisms and drug combinations.
\newblock {\em Cancer Research}, 81(17):4603--4617, 2021.

\end{thebibliography}
\bibliographystyle{plain}

\end{document}